\def\nfrac#1#2{{\textstyle\frac{#1}{#2}}}
\def\dfrac#1#2{\lower0.15ex\hbox{\large$\frac{#1}{#2}$}}
\def \qedbox{\hfill\vbox{\hrule\hbox{\vrule height1.3ex\hskip0.8ex\vrule}\hrule}}
\begin{document}

\title{\Large{The switch Markov chain for sampling irregular graphs}\\ 
        {\large (Extended Abstract)}}
\author{Catherine Greenhill\thanks{{\tt c.greenhill@unsw.edu.au},\
School of Mathematics and Statistics, UNSW Australia, Sydney 2052, Australia.
Research supported by Australian Research Council Discovery Project DP140101519.}}

\date{}

\maketitle

\begin{abstract}
The problem of efficiently sampling from a set of (undirected) graphs 
with a given degree sequence has many applications. One approach to this problem uses
a simple Markov chain, which we call the switch chain, to perform the sampling.  
The switch chain is known to be rapidly mixing for regular degree sequences.
We prove that the switch chain is rapidly mixing for any
degree sequence with minimum degree at least 1 and with maximum degree $d_{\max}$
which satisfies $3\leq d_{\max}\leq \nfrac{1}{4}\, \sqrt{M}$, where $M$ is the 
sum of the degrees. 
The mixing time bound obtained is only an order of $n$ larger than that
established in the regular case, where $n$ is the number of vertices.
\end{abstract}

\section{Introduction}\label{s:intro}

The switch chain is a natural Markov chain for sampling from a set of graphs
with a given degree sequence.
Each move of the chain selects two distinct, non-incident edges
edges uniformly at random and attempts to replace these edges by  
a perfect matching of the four endvertices, chosen
uniformly at random.  The proposed move is rejected if a multiple edge
would be formed.

We call each such move a \emph{switch}.  Ryser~\cite{ryser} used switches to study the
structure of 0-1 matrices. Markov chains based on switches have been introduced by 
Besag and Clifford~\cite{BC89} for 0-1 matrices (bipartite graphs), Diaconis and Sturmfels~\cite{DS98}
for contingency tables and Rao, Jana and Bandyopadhyay~\cite{RJB} for directed graphs.

The switch chain is aperiodic and its transition matrix is symmetric. 
It is well-known that the switch chain is irreducible for any (undirected)
degree sequence: see~\cite{petersen,taylor}. 

In order for the switch chain to be useful for sampling, it must converge quickly
to its stationary distribution.  
(For Markov chain definitions not given here, see~\cite{Jerrum}.)

Cooper, Dyer and Greenhill~\cite{CDG,CDG-corrigendum}
showed that the switch chain is rapidly mixing for regular undirected graphs.
Here the degree $d=d(n)$ may depend on $n$, the number of vertices.  The mixing time
bound is given as a polynomial in $d$ and $n$. 
Earlier, Kannan, Tetali and Vempala~\cite{KTV} investigated the mixing time 
of the switch
chain for regular bipartite graphs.  
Greenhill~\cite{directed} proved that the
switch chain for regular directed graphs (that is, $d$-in, $d$-out directed graphs)
is rapidly mixing, again for any $d=d(n)$.
Mikl{\' o}s, Erd{\H o}s and Soukup~\cite{MES} proved that the switch chain is
rapidly mixing on half-regular bipartite graphs; that is, bipartite degree sequences
which are regular for vertices on one side of the bipartition, but need not be
regular for the other.

The proofs of all these mixing results used a multicommodity flow argument~\cite{sinclair}. 
In each case, regularity (or half-regularity) was only required for one 
lemma, which we will call the \emph{critical lemma}.  This is a counting
lemma which is used to bound
the maximum load of the flow 
(see~\cite[Lemma 4]{CDG},~\cite[Lemma 5.6]{directed} and~\cite[Lemma 6.15]{MES}).  

In Section~\ref{s:analysis} we give an alternative proof of the critical lemma
which does not require regularity.  This
establishes the following theorem, extending the 
rapid mixing result from~\cite{CDG} to irregular degree sequences
which are not too dense.  


Given a degree sequence $\boldsymbol{d} = (d_1,\ldots, d_n)$, write
$\Omega(\boldsymbol{d})$ for the set of all (simple) graphs with vertex
set $[n] = \{ 1,2,\ldots, n\}$ and degree sequence $\boldsymbol{d}$.
Recall that $\boldsymbol{d}$ is called \emph{graphical} when $\Omega(\boldsymbol{d})$
is nonempty.  We restrict our attention to graphical sequences.
Write $d_{\min}$ and $d_{\max}$ for the minimum and maximum degree in $\boldsymbol{d}$,
respectively, and let $M = \sum_{j=1}^n d_j$ be the sum of the degrees.

\begin{theorem}
Let $\boldsymbol{d} = (d_1,\ldots, d_n)$ be a graphical degree sequence 
such that $d_{\min}\geq 1$ and 
\[ 3\leq d_{\max}\leq \frac{1}{4}\, \sqrt{M}.\]
The mixing time $\tau(\varepsilon)$ of the switch Markov chain with
state space $\Omega(\boldsymbol{d})$ satisfies
\[ \tau(\varepsilon) \leq \dfrac{1}{10}\, d_{\max}^{14}\, M^9 \left( M\log(M) +
 \log(\varepsilon^{-1})\right).\]
\label{main}
\end{theorem}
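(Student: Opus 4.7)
The plan is to follow the multicommodity flow framework of Sinclair and the canonical-path construction of Cooper, Dyer and Greenhill~\cite{CDG}, adapting only the critical lemma where regularity was previously used.

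Sinclair's theorem reduces bounding the mixing time of the (reversible) switch chain to exhibiting a flow $f$ which routes one unit of demand between every ordered pair $(G,G')\in \Omega(\boldsymbol{d})^2$ along paths in the transition graph, and controlling the maximum load
\[
\rho(f) \,=\, \max_{(Z,Z')}\, \frac{1}{\pi(Z)\,P(Z,Z')}\sum_{\substack{G,G':\\(Z,Z')\in p_{G,G'}}} \pi(G)\,\pi(G')\,|p_{G,G'}|.
\]
Since $\pi$ is uniform on $\Omega(\boldsymbol{d})$ and $P(Z,Z')$ is a constant depending only on $M$, bounding $\rho(f)$ reduces to controlling, for each transition $(Z,Z')$, the number of source--sink pairs $(G,G')$ whose canonical path uses it. I would construct canonical paths from the symmetric difference $H = G \triangle G'$, decomposing $H$ into alternating circuits and (using $d_{\min}\geq 1$) alternating paths, ordering the components and vertices by a fixed rule, and processing each component in turn via a sequence of switches that progressively converts $G$-edges into $G'$-edges along the component.

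The standard encoding lemma then asserts that, given a transition $(Z,Z')$ on the canonical path from $G$ to $G'$, the pair $(G,G')$ can be reconstructed from $(Z,Z')$ together with a small amount of auxiliary data $L$ (specifying the currently active component, its starting vertex, and a bounded number of corrective edges); the number of possible encodings is polynomial in $M$ and $d_{\max}$. The subtlety is that the reconstructed pair need not satisfy $G,G'\in \Omega(\boldsymbol{d})$; rather, one only knows that $G\in \Omega(\boldsymbol{d}^\ast)$ and $G'\in \Omega(\boldsymbol{d}^{\ast\ast})$ for perturbed sequences $\boldsymbol{d}^\ast,\boldsymbol{d}^{\ast\ast}$ that differ from $\boldsymbol{d}$ in at most four entries by at most one each.

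The main obstacle, and the only place where the irregular setting demands new work, is the critical lemma: one must show that
\[
\frac{|\Omega(\boldsymbol{d}^\ast)|\cdot|\Omega(\boldsymbol{d}^{\ast\ast})|}{|\Omega(\boldsymbol{d})|^2}
\]
is bounded by a polynomial in $M$ and $d_{\max}$ for every perturbation $(\boldsymbol{d}^\ast,\boldsymbol{d}^{\ast\ast})$ arising in the encoding. My plan is to establish this by a local switching argument in the style of McKay: given $H\in \Omega(\boldsymbol{d}^\ast)$ with one vertex of excess degree and one of deficient degree relative to $\boldsymbol{d}$, count the switches that convert $H$ into an element of $\Omega(\boldsymbol{d})$ and, separately, the inverse switches. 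Both counts are governed, up to lower-order error, by terms of order $M$ and $d_{\max}^2$, so the desired ratio is polynomial exactly when these error terms can be controlled. This is the quantitative source of the hypothesis $d_{\max}\leq \tfrac{1}{4}\sqrt{M}$. The hardest part of the proof will be carrying out this switching analysis uniformly over all the bounded perturbations $\boldsymbol{d}^\ast$ produced by the canonical-path construction, since none of the simplifications afforded by regularity are available.
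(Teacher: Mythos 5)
Your overall skeleton (Sinclair's flow bound, the Cooper--Dyer--Greenhill canonical paths, and a switching argument for the one lemma where regularity was used) matches the paper, but the object you propose to count in the critical lemma is not the one that actually arises, and the switching analysis you sketch is precisely the step that fails for irregular sequences. In the CDG framework the encoding is $L = G + G' - Z$, a symmetric matrix with row sums \emph{exactly} $\boldsymbol{d}$ but with entries in $\{-1,0,1,2\}$; the pair $(G,G')$ is recovered exactly (and lies in $\Omega(\boldsymbol{d})$), and what must be bounded is the number of such ``defective'' encodings consistent with $Z$, i.e.\ $|\mathcal{L}^*(Z)|/|\Omega(\boldsymbol{d})|$ --- not the stability-type ratio $|\Omega(\boldsymbol{d}^\ast)||\Omega(\boldsymbol{d}^{\ast\ast})|/|\Omega(\boldsymbol{d})|^2$ over perturbed degree sequences. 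The paper is explicit that its hypothesis is neither implied by nor implies stability, so substituting the stability ratio changes the problem; moreover your version discards the two constraints that make the count tractable, namely consistency with $Z$ and the restriction of the at most four defect edges to the configurations of Lemma~\ref{oldstuff}(ii) and Figure~\ref{f:possible}.

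The concrete missing ideas are the two ingredients of Lemma~\ref{3switches}. First, ordinary switches do not suffice: for an irregular sequence the forward count of switches removing a given defect can be zero (the regular-case proof of \cite[Lemma 3]{CDG} uses regularity exactly to guarantee the auxiliary edges exist), which is why the paper replaces them by 3-switches (circular $C_6$-swaps) whose forward count is $M - O(d_{\max}^2) \geq M/2$ under $d_{\max}\leq \frac14\sqrt{M}$. Second, one needs the structural Lemma~\ref{structure} --- endpoints of a $2$-defect edge have degree at least $2$, rising to $3$ or $4$ when two defects meet --- proved by tracing defects back to shortcut edges and odd chords of the circuit decomposition; this is where $d_{\min}\geq 1$ and the ``good encoding'' refinement $\mathcal{L}^*(Z)\subseteq\mathcal{L}(Z)$ enter, and it is what guarantees the intermediate vertices of each 3-switch can be chosen. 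Your plan acknowledges that the irregular case is hard at this point but supplies neither device, so the phase-by-phase removal of defects (and hence the bound $|\mathcal{L}^*(Z)|\leq \frac15 M^6 |\Omega(\boldsymbol{d})|$ feeding into the load estimate) cannot be carried out as written. A small additional slip: $G\triangle G'$ has even degree at every vertex, so it decomposes into closed alternating circuits only; there are no alternating paths, and $d_{\min}\geq 1$ is not needed for the decomposition.
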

This result covers many different degree sequences, for example:
\begin{itemize}
\item sparse
graphs with constant average degree and maximum degree a 
sufficiently small constant times $\sqrt{n}$, 
\item dense graphs with linear average degree and 
maximum degree a sufficiently small constant times $n$.
\end{itemize}
The mixing time bound given above is 
at most a factor of $n$ 
larger than that obtained in~\cite{CDG,CDG-corrigendum} in the regular case.
(To see this, substitute $M  = d_{\max} n$, which holds when $\boldsymbol{d}$ 
is regular: note that $M\leq d_{\max} n$ always holds, as $M$ is the sum of the
degrees.)

We expect that our approach also applies to directed graphs, which should
allow the rapid mixing proof from~\cite{directed} to be extended to
irregular directed degree sequences, under conditions analogous to those
in Theorem~\ref{main}.

\subsection{Related work}\label{s:background}

There are several approaches to the problem of sampling graphs with a given
degree sequence, though none is known to be efficient for all degree sequences.  
The configuration model of Bollob{\' a}s~\cite{bollobas}
gives expected polynomial time uniform sampling if $d_{\max} = O(\sqrt{\log n})$.
McKay and Wormald~\cite{McKW90} adapted the configuration model to give an algorithm which 
performs uniform sampling from
$\Omega(\boldsymbol{d})$ in expected polynomial time when $d_{\max} = O(M^{1/4})$.

Jerrum and Sinclair~\cite{JS90} used a construction of Tutte's to reduce the problem
of sampling from $\Omega(\boldsymbol{d})$ to the problem of sampling perfect matchings
from an auxilliary graph. The resulting Markov chain algorithm is rapidly mixing
if the degree sequence $\boldsymbol{d}$ is \emph{stable}: see~\cite{JMS}.
Stable sequences are those in which small local changes to the degree sequences
do not greatly affect the size of $|\Omega(\boldsymbol{d})|$.
Many degree sequences which satisfy the conditions of Theorem~\ref{main} are
stable;  however, not all stable sequences satisfy the conditions
of Theorem~\ref{main}. (For example,
if $d_{\min} = n/9$ and $d_{\max} =4n/9$ then $\boldsymbol{d}$
is stable~\cite{JMS} but then $\sqrt{M}\leq 2n/3$, which is
not large enough for Theorem~\ref{main}.)

Steger and Wormald~\cite{SW} gave an easily-implementable algorithm
for sampling regular graphs, and proved that their algorithm performs
asymptotically uniform sampling in polynomial time when $d=o(n^{1/28})$
(where $d$ denotes the degree).
Kim and Vu~\cite{KV} gave a sharper analysis and established that $d=o(n^{1/3})$
suffices for efficient asymptotically uniform sampling. Bayati, Kim and Saberi~\cite{BKS}
extended Steger and Wormald's algorithm to irregular degree sequences, giving
polynomial-time asymptotically uniform sampling
when $d_{\max} = o(M^{1/4})$.  From this they constructed
a sequential importance sampling algorithm for $\Omega(\boldsymbol{d})$. 
Recently, Zhao~\cite{zhao} described and analysed a similar 
approach to that of~\cite{McKW90}, in a general combinatorial setting. 
Zhao shows that for sampling from $\Omega(\boldsymbol{d})$, 
when $d_{\max} = o(M^{1/4})$, his algorithm performs 
asymptotically uniform sampling in time $O(M)$.

Finally we note that Barvinok and Hartigan~\cite{BH} showed that the 
adjacency matrix of a random element of $\Omega(\boldsymbol{d})$ is ``close'' 
to a certain ``maximum entropy matrix'', when the degree sequence is \emph{tame}.
The definition of tame depends on the maximum entropy matrix, but a sufficient
condition is that $d_{\min}\geq \alpha (n-1)$ and $d_{\max} \leq \beta (n-1)$
for some constants $\alpha,\beta > 0$.  Some degree sequences satisfying this
latter condition are stable sequences, 
and many of these degree sequences also satisfy the condition of Theorem~\ref{main}.  
It would be interesting to
explore further the connections between stable degree sequences,  tame degree 
sequences and the mixing rate of the switch Markov chain.

It is not known whether the corresponding counting problem
(exact evaluation of $|\Omega(\boldsymbol{d})|$) is $\#P$-complete.
There are several results giving asymptotic enumeration formulae
for $|\Omega(\boldsymbol{d})|$ under various conditions on $\boldsymbol{d}$:
see for example~\cite{BH,ranX,McKW91} and references therein.

\section{The switch chain and multicommodity flow}\label{s:undir}

A transition of the switch chain on $\Omega(\boldsymbol{d})$ is performed
as follows: from the current state $G\in\Omega(\boldsymbol{d})$, choose
an unordered pair of 
two distinct non-adjacent edges uniformly at random, 
say $F=\{\{ x,y\},\{ z,w\}\}$, and choose a perfect matching $F'$
from the set of three perfect matchings of (the complete graph on)
$\{ x,y,z,w\}$, chosen uniformly at random.  If 
$F'\cap \left( E(G) \setminus F\right) = \emptyset$ then the next state
is the graph $G'$ with edge set $\left(E(G) \setminus F\right)\cup F'$,
otherwise the next state is $G'=G$.

Define $M_2 = \sum_{j=1}^n d_j(d_j-1)$. 
If $P(G,G')\neq 0$ and $G\neq G'$ then $P(G,G') = \frac{1}{3 a(\boldsymbol{d})}$, where
\begin{equation}
\label{ad}
 a(\boldsymbol{d}) = \binom{M/2}{2} - \dfrac{1}{2}\, M_2 
\end{equation}
is the number of unordered pairs of distinct nonadjacent edges in $G$.
This shows that the Markov chain is symmetric.
The chain is aperiodic since by definition $P(G,G)\geq 1/3$ for all
$G\in\Omega(\boldsymbol{d})$.

\subsection{Multicommodity flow}\label{ss:flow}

To bound the mixing time of the switch chain, we apply a multicommodity
flow argument.  Suppose that $\mathcal{G}$ is the graph underlying a Markov
chain $\mathcal{M}$, so that $xy$ is an edge of $\mathcal{G}$ if and only
if $P(x,y)>0$. A \emph{flow} in $\mathcal{G}$ is a function $f:\mathcal{P}\rightarrow [0,\infty)$ such that
\[ \sum_{p\in\mathcal{P}_{xy}} f(p) = \pi(x)\pi(y) \quad \text{ for all }
 \,\, x,y\in\Omega,\,\, x\neq y.\]
Here $\mathcal{P}_{xy}$ is the set of all simple directed paths from $x$ to
$y$ in $\mathcal{G}$ and $\mathcal{P} = \cup_{x\neq y} \mathcal{P}_{xy}$.
Extend $f$ to a function on oriented edges by setting
$f(e) = \sum_{p\ni e} f(p)$,
so that $f(e)$ is the total flow routed through $e$.  Write $Q(e) = \pi(x) P(x,y)$
for the edge $e=xy$.  Let $\ell(f)$ be the length of the longest path with
$f(p) > 0$, and let $\rho(e) = f(e)/Q(e)$ be the \emph{load} of the edge $e$.
The \emph{maximum load} of the flow is
$\rho(f) = \max_e \rho(e)$.
Using Sinclair~\cite[Proposition 1 and Corollary 6']{sinclair},  the mixing time of
$\mathcal{M}$ can be bounded above by
\begin{equation}
\label{flowbound} 
 \tau(\varepsilon) \leq \rho(f)\ell(f)\left(\log(1/\pi^*) + \log(\varepsilon^{-1}\right)\end{equation}
where $\pi^* = \min\{ \pi(x) \mid x\in\Omega\}$.

\subsection{Defining the flow}

The definition of the multicommodity flow given in~\cite[Section 2.1]{CDG} carries across
to irregular degree sequences without change.  This is because the flow from $G$ to $G'$
depends only on the symmetric difference $G\triangle G'$ of $G$ and $G'$, 
treated as a 2-edge-coloured
graph (with edges from $G\setminus G'$ coloured blue and edges from 
$G'\setminus G$ coloured red, say). 
The blue degree at a given vertex equals the red degree at that vertex, but 
in general the blue degree sequence will not be regular.
Hence the multicommodity flow definition given in~\cite{CDG}
is already general enough to handle irregular degree sequences.

The multicommodity flow is defined using a process which we now sketch.
Given $G, G'\in\Omega(\boldsymbol{d})$:
\begin{itemize}
\item Define a bijection from the set of blue edges incident at $v$ to the
set of red edges incident at $v$, for each vertex $v\in [n]$.
The vector of these bijections is called a \emph{pairing} $\psi$,
and the set of all possible pairings is denoted $\Psi(G,G')$.
\item  The pairing gives a canonical way to decompose the symmetric difference
$G\triangle G'$ into a sequence of \emph{circuits}, where each circuit is a 
blue/red-alternating closed walk.
\item Each circuit is decomposed in a canonical way
into a sequence of simpler circuits
of two types: 1-\emph{circuits} and 2-\emph{circuits}.  A 1-circuit is
an alternating cycle in $G\triangle G'$, while a 2-circuit is
an alternating walk with one vertex of degree 4, the rest of degree 2,
consisting of two odd cycles which share a common vertex.
Each 1-circuit or 2-circuit has a designated \emph{start vertex}.
(The start vertex of a 2-circuit is the unique vertex of degree 4.)
An important fact is that the 1-circuits and 2-circuits are
\emph{pairwise edge-disjoint}.
\item  Each 1-circuit or 2-circuit is processed in a canonical way to
give a segment of the canonical path $\gamma_\psi(G,G')$.
\end{itemize}
For full details see~\cite[Section 2.1]{CDG}.

\section{Analysing the flow}\label{s:analysis}

Now we show how to bound the load of the flow by adapting the analysis 
from~\cite{CDG}. 
Note that some proofs in~\cite{CDG} used the assumption $d=d(n)\leq n/2$,
since (for regular sequences) the general result follows by complementation.  
This trick does
not work for irregular degree sequences, so we cannot make a similar
assumption here.

Given matrices $G$, $G'$, $Z\in\Omega(\boldsymbol{d})$, define the
\emph{encoding} $L$ of $Z$ (with respect to $G,G'$) by
\[ L + Z = G + G'\]
by identifying each of $Z$, $G$ and $G'$ with their symmetric 0-1 adjacency matrices.
Then $L$ is a symmetric $n\times n$ matrix with entries in $\{ -1, 0, 1,2\}$
and with zero diagonal.
Entries which equal $-1$ or 2 are called \emph{defect entries}.  Treating $L$ as an
edge-labelled graph with edges labelled $-1, 1, 2$ (and omitting edges corresponding to
zero entries), a \emph{defect edge} is an edge labelled $-1$ or $2$.
(In~\cite{CDG} these were called ``bad edges''.)
Specifically, we will refer to $(-1)$-\emph{defect edges} and to 
$2$-\emph{defect edges}.  A 2-defect edge is present in both $G$ and
$G'$ but is absent in $Z$, while a $(-1)$-defect edge is absent in
both $G$ and $G'$ but is present in $Z$.

We say that the \emph{degree} of vertex $v$ in $L$ is the sum of the
labels of the edges incident with $v$ (equivalently, the sum of the
entries in the row of $L$ corresponding to $v$).  By definition,
the degree sequence of $L$ equals $\boldsymbol{d}$.

Some proofs from~\cite{CDG,CDG-corrigendum} also apply in the
irregular case without any substantial change (after replacing
$d$ by $d_{\max}$). These proofs refer only to the symmetric difference 
and the process used to construct the multicommodity
flow (and none of them use the assumption $d\leq n/2$).  
 We state two of these results now. 


\begin{lemma}
Suppose that $G,G',Z,Z'\in\Omega(\boldsymbol{d})$ are such that $(Z,Z')$ is
a transition of the switch chain which lies on the canonical path 
$\gamma_\psi(G,G')$ for some $\psi\in\Psi(G,G')$.
Let $L$ be the encoding of $Z$ with respect to $(G,G')$.  Then the following statements
hold:
\begin{enumerate}
\item[\emph{(i)}]
\emph{(\cite[Lemma 1]{CDG})}\ From $(Z,Z')$, $L$ and $\psi$ it is possible to uniquely recover $G$ and $G'$.
\label{unique}
\item[\emph{(ii)}] 
\emph{(\cite[Lemma 2]{CDG})}\
There are at most four defect edges in $L$.  The labelled
graph consisting of the defect edges in $L$ must form a subgraph of one
of the five labelled graphs shown in Figure~\ref{f:possible},
where \emph{``?''} represents a label which may be either $-1$ or 2.
\label{defects}
\end{enumerate}
\label{oldstuff}
\end{lemma}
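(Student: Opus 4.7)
The plan is to address the two parts by unpacking the canonical-path construction sketched in Section~\ref{ss:flow}. For part~(i), I would start from the defining identity $L = G + G' - Z$, rearranged as $G + G' = L + Z$. Since the right-hand side is determined by $(Z, L)$ and takes values in $\{0, 1, 2\}$, we immediately recover $G \cap G'$ (positions with value $2$) and $G \cup G'$ (positions with value $1$ or $2$); in particular the symmetric difference $G \triangle G'$ and the unordered pair $\{G, G'\}$ are determined. What remains is to assign the colour blue (edges of $G \setminus G'$) and red (edges of $G' \setminus G$) to the edges of $G \triangle G'$. For this I would use $\psi$ together with the ordered transition $(Z, Z')$: the pairing $\psi$, combined with either choice of colouring, determines the same decomposition into $1$- and $2$-circuits, and the canonical processing of those circuits produces a directed path whose orientation is pinned down by which of $Z, Z'$ comes first. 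One then runs the canonical path from $Z$ backwards to recover $G$ and forwards to recover $G'$.

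For part~(ii), the approach is local. The encoding $L$ measures the discrepancy between the current state $Z$ and the pair $\{G, G'\}$, and away from the currently processed circuit this discrepancy consists exactly of not-yet-processed edges of $G \triangle G'$, which carry labels $\pm 1$ and are therefore not defect edges. Consequently every defect entry in $L$ is forced to be localised at the switch presently being executed, or at the distinguished start vertex of the current circuit (in the $2$-circuit case, the degree-$4$ vertex where the two odd cycles meet). Since a single switch touches only four vertices, and the start vertex of a $2$-circuit can contribute at most two further defect edges when the odd cycles are being chained, a finite case analysis yields the bound of four defect edges and enumerates the shapes they may form.

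The main obstacle is the case analysis in~(ii). One must partition the possibilities by (a) whether the circuit being processed is a $1$-circuit or a $2$-circuit, (b) which phase of processing has been reached (the opening switch, a generic intermediate switch, or a closing switch where a triangle or parallel-edge obstruction forces a non-standard local move), and (c) for $2$-circuits, the behaviour at the shared degree-$4$ vertex. In each case one must check that the defect edges in $L$ form a subgraph of one of the five labelled configurations in Figure~\ref{f:possible}. Part~(i) is comparatively routine once one observes that $\psi$ together with the direction of $(Z, Z')$ determines the blue/red colouring; the substantive work for this lemma lies entirely in~(ii).
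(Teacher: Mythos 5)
The paper does not actually prove this lemma: both parts are quoted from \cite{CDG} (Lemmas 1 and 2 there), and the only argument offered is the observation, made just before the statement, that those proofs refer solely to the symmetric difference and the flow-construction process and never invoke regularity or the assumption $d\leq n/2$, so they carry over to irregular degree sequences unchanged. Your proposal instead tries to reprove the \cite{CDG} results from scratch. That is a legitimate but different task, and note that in the context of this paper the one substantive check is precisely the one you do not address, namely that no step of the original proofs uses $d_i=d$ for all $i$.

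As a reconstruction, your sketch has the right architecture but several concrete gaps. First, knowing $G\cap G'$ and $G\cup G'$ does \emph{not} determine the unordered pair $\{G,G'\}$: it determines $G\triangle G'$, and splitting that symmetric difference into $G\setminus G'$ and $G'\setminus G$ is exactly the nontrivial part, which you then (correctly) say must be done using $\psi$ and $(Z,Z')$ --- so the earlier claim should be withdrawn. Second, ``run the canonical path from $Z$ backwards to recover $G$'' is circular as stated, since the canonical path is defined in terms of $G$ and $G'$; the actual argument locates the transition $(Z,Z')$ within the circuit decomposition determined by $\psi$, and uses the fact that circuits processed earlier agree with $G'$ in $Z$ while unprocessed circuits agree with $G$, which pins down the blue/red colouring edge by edge. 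Third, unprocessed edges of $G\triangle G'$ carry labels $0$ or $1$ in $L$, not $\pm1$ as you write; $-1$ \emph{is} a defect label, and it arises only from a shortcut edge or odd chord of the circuit currently being processed that is present in $Z$ but absent from both $G$ and $G'$. Finally, for part (ii) you correctly identify the case split (1-circuit versus 2-circuit, opening/intermediate/closing switch, behaviour at the degree-4 start vertex), but the entire content of the result is that case analysis, and announcing it is not carrying it out: without executing it one cannot verify the bound of four defect edges, nor that they form subgraphs of the five configurations in Figure~\ref{f:possible}.
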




\begin{figure}
\begin{center}
\psfrag{a}{2}\psfrag{x}{$-1$}\psfrag{z}{?}
\includegraphics[scale=0.45]{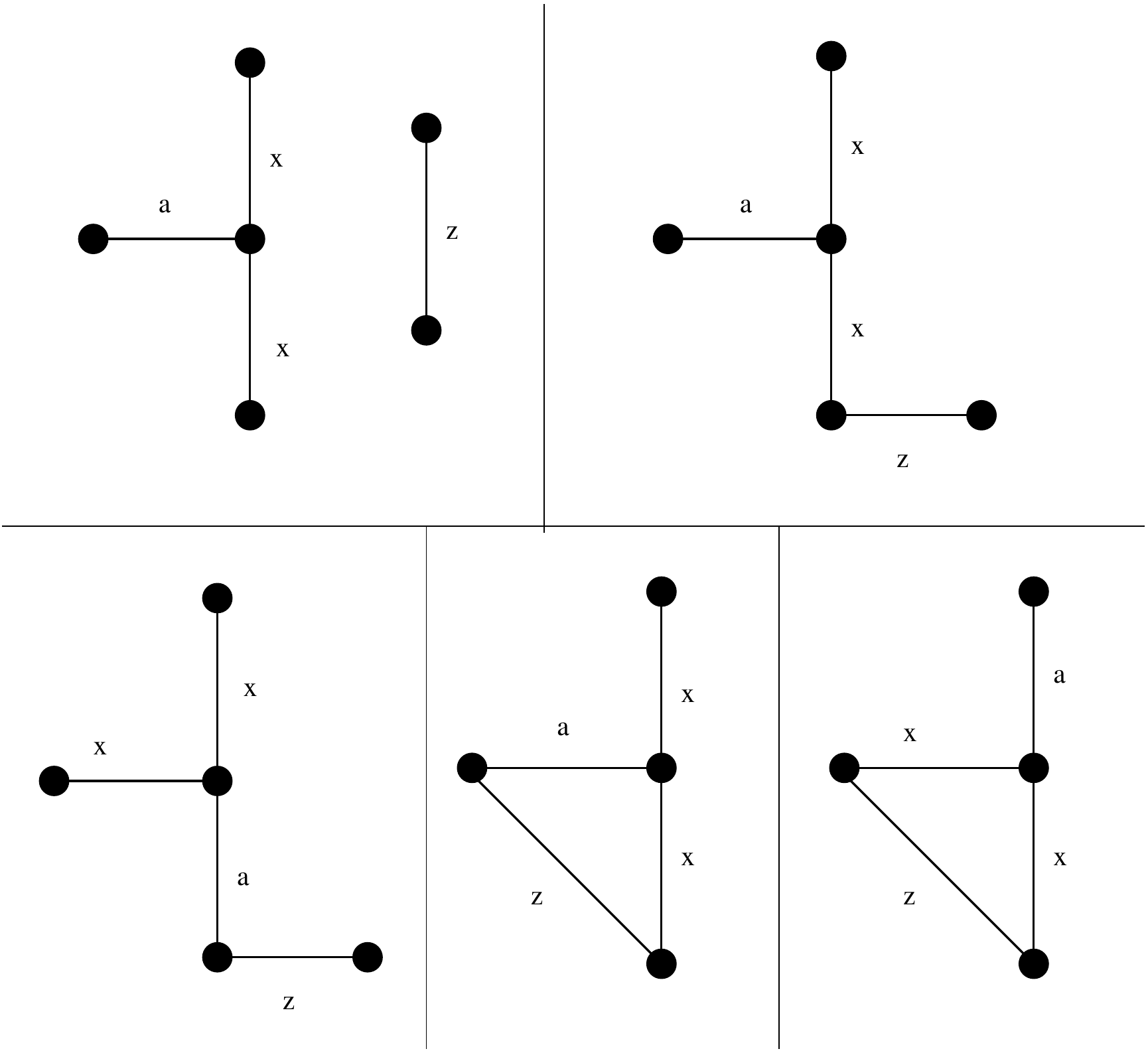}
\caption{The five possible configurations of four defect edges}
\label{f:possible}
\end{center}
\end{figure}


The next result collects together some further useful results about
encodings.  

\begin{lemma}
Suppose that the conditions of Lemma~\ref{oldstuff} hold.
Let $x,y,z\in [n]$ be distinct vertices.  
\begin{itemize}
\item[\emph{(i)}] If $L(x,y)=2$ then $d_x\geq 2$, $d_y\geq 2$.  
\item[\emph{(ii)}] If $L(x,y)=2$ and $L(y,z)=2$ then $d_y\geq 4$.
\item[\emph{(iii)}]  If $L(x,y)=2$ and $L(y,z)=-1$ then $d_y\geq 3$.
\end{itemize}
\label{structure}
\end{lemma}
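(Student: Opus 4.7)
My plan is to prove each part by a case analysis on the degree at $y$ (or $x$), showing that any value below the claimed bound forces a contradiction. The argument rests on two structural facts about the canonical flow of Section~\ref{ss:flow} (detailed in \cite[Section 2.1]{CDG}). \emph{Fact A:} if $N_G(v)=N_{G'}(v)$ then $v$ has zero blue and red degree in $G\triangle G'$, lies on no 1-circuit or 2-circuit, and is therefore untouched by every switch of the canonical path; hence $Z(v,\cdot)$ coincides with $G(v,\cdot)$ and no defect edge is incident with $v$. \emph{Fact B:} if $v$ has exactly one blue and one red edge incident---so $v$ is a ``pass-through'' vertex of a single 1- or 2-circuit---then no $2$-defect edge is incident with $v$ in $L$, because the switches made while traversing the circuit through $v$ merely swap its blue edge for its red edge and never select a common edge at $v$.

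Part~(i) follows immediately from Fact~A: $L(x,y)=2$ forces $G(x,y)=G'(x,y)=1$, so if $d_x=1$ then $y$ is $x$'s only neighbour in both $G$ and $G'$, giving $Z(x,y)=G(x,y)=1$ and contradicting $L(x,y)=2$. By symmetry $d_y\geq 2$.

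For parts~(ii) and~(iii), when $d_y$ is strictly below the desired bound the hypotheses on $L$ pin down $N_G(y)$ and $N_{G'}(y)$ up to at most one ``extra'' vertex each. If those extras coincide (or are absent), then $N_G(y)=N_{G'}(y)$ and Fact~A yields $Z(y,x)=1$, contradicting $L(y,x)=2$ in~(ii), and $Z(y,z)=0$, contradicting $L(y,z)=-1$ in~(iii). Otherwise the extras differ, so $y$ is a pass-through vertex and Fact~B rules out the $2$-defect edge $L(y,x)=2$. Both branches produce contradictions, yielding $d_y\geq 4$ in~(ii) and $d_y\geq 3$ in~(iii).

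I expect Fact~B to be the main technical obstacle: verifying it requires inspecting the explicit switch sequence in \cite[Section 2.1]{CDG} to confirm that, at a vertex of $G\triangle G'$ with blue/red degree $2$, the canonical path never triggers a switch that involves a common edge at that vertex. Fact~A, by contrast, is immediate from the confinement of each switch to a single circuit.
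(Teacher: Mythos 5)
Your proof of part (i) is correct and is essentially the contrapositive of the paper's argument: Fact A is a genuine consequence of the construction (every switch on the canonical path involves only vertices of the circuit currently being processed, and shortcut edges and odd chords join vertices of a circuit), and it gives $d_x,d_y\geq 2$ as you say.

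Parts (ii) and (iii), however, rest entirely on Fact B, and Fact B is false. In the construction of \cite{CDG}, a $2$-defect edge is an odd chord of a 1-circuit or a shortcut edge of a 2-circuit: an edge of $G\cap G'$ joining two vertices of that circuit, temporarily deleted while the circuit is processed. A 1-circuit is an alternating \emph{cycle}, so every vertex on it --- in particular both endpoints of an odd chord --- has exactly one blue and one red edge from that cycle; if such an endpoint lies on no other circuit, it is precisely a ``pass-through'' vertex in your sense, and yet it is incident with a $2$-defect edge in the intermediate encodings. This is the generic situation already underlying part (i): the paper concludes only $d\geq 2$ for an endpoint of a $2$-defect edge (one circuit edge lying in $G$, plus the common edge that is the defect itself). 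If Fact B held, every endpoint of a $2$-defect edge would have blue degree at least $2$ and hence degree at least $3$, strengthening (i) --- which does not happen. So the branch ``the extras differ, hence $y$ is pass-through, hence no $2$-defect at $y$'' does not close, and (ii) and (iii) are unproved.

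The ingredient you are missing is that \emph{two} defect edges meeting at $y$ must come from two \emph{distinct} circuits (in the paper: one odd chord of a 1-circuit $C_1$ and one shortcut edge of a 2-circuit $C_2$), and the circuits are pairwise edge-disjoint. Thus $y$ carries at least one $G$-edge of $C_1$ and one $G$-edge of $C_2$, both distinct from the defect edges since defect edges do not lie in $G\triangle G'$. Adding the $G$-edges contributed by the defects themselves --- two in case (ii), since both $2$-defects lie in $G\cap G'$, but only one in case (iii), since a $(-1)$-defect is absent from $G$ --- gives $d_y\geq 4$ and $d_y\geq 3$. Your neighbourhood-pinning correctly reduces everything to the case where $y$ has blue and red degree $1$, but there the contradiction must come from $y$ being forced onto two edge-disjoint circuits, not from Fact B.
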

%
\begin{proof}
It follows from the definition of the multicommodity flow given in~\cite{CDG}
that a 2-defect edge $\{ x,y\}$ (with $L(x,y)=2$) can only arise in
two cases: 
\begin{itemize}
\item[(a)]
$\{x,y\}$ is a \emph{shortcut edge} 
which is present in $G, G'$ but which is
absent in $Z$.  (See~\cite[Figure 4]{CDG}.)
In this case, $x$ and $y$ are vertices on some
2-circuit, which is an alternating blue/red walk in the symmetric difference
$G\triangle G'$.  Hence both $x$ and $y$ have degree at least two
in $G$.
\item[(b)] $\{ x,y\}$ is an \emph{odd chord} which is present in $G, G'$
but which is absent in $Z$.    
(See the section ``Processing a 1-circult'' in~\cite{CDG}.) 
In this case, $x$ and $y$ are vertices
on some 1-circuit, which is an alternating blue/red walk in the
symmetric difference $G\triangle G'$.  
Again, this shows that both $x$ and $y$ have degree at least two in $G$.
\end{itemize}
This proves (i).  

Next, if $y$ is incident with two edges
of defect 2 then it must be that 
one is an odd chord for a 1-circuit $C_1$ and
one is a shortcut edge for a 2-circuit $C_2$.
Then $y$ is incident in $G$ with an edge of $C_1$, an edge of $C_2$
and the two edges $\{ x,y\}$, $\{y,z\}$ which are 2-defect edges in $L$.
Since $C_1$ and $C_2$ are edge-disjoint and no defect edge belongs
to $G\triangle G'$, it follows that $d_y\geq 4$, proving (ii).

We may adapt this argument to prove (iii), 
noting that a $(-1)$-defect may only
arise from a shortcut edge or an odd chord which is
absent in $G$ and $G'$ and present in $Z$.   
\hfill \qedbox
\end{proof}

Now we extend the term ``encoding'' to refer to any symmetric
$n\times n$ matrix with entries in $\{ -1, 0, 1,2\}$ which
has zero diagonal and row sums given by $\boldsymbol{d}$.
We say that an encoding $L$ is \emph{consistent} with $Z$ 
if $L+Z$ only takes entries in $\{0,1,2\}$.
Say that an encoding is \emph{valid} if it satisfies the conclusions
of Lemma~\ref{oldstuff}(ii), and that a valid encoding is \emph{good}
if it also satisfies the conclusion of Lemma~\ref{structure}. 
Let $\mathcal{L}(Z)$ be the set of valid encodings which are consistent with
$Z$, and let $\mathcal{L}^*(Z)$ be the set of good encodings which are
consistent with $Z$.  In~\cite{CDG} the set $\mathcal{L}(Z)$ was
studied, but we require a bit more information about our encodings,
so we will focus on the smaller set $\mathcal{L}^*(Z)$.

\begin{lemma}
\emph{(\cite[Lemma 5]{CDG} and~\cite[Lemma 1]{CDG-corrigendum})}\
The load $f(e)$ on the transition $e=(Z,Z')$ satisfies
\[ f(e)\leq  d_{\max}^{14}\,{\frac{|\mathcal{L}^*(Z)|}{|\Omega(\boldsymbol{d})|}}.\]
\label{fbound}
\end{lemma}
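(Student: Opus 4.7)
The plan is to adapt the proof of the analogous regular-case bound~\cite[Lemma 5]{CDG}, as revised in~\cite[Lemma 1]{CDG-corrigendum}, and verify that the argument survives when $d$ is replaced by $d_{\max}$. Because the stationary distribution is uniform on $\Omega(\boldsymbol{d})$, the averaged canonical-path flow expands as
\[
 f(e) = \frac{1}{|\Omega(\boldsymbol{d})|^{2}}
 \sum_{(G,G')} \frac{|\{\psi \in \Psi(G,G') : e \in \gamma_{\psi}(G,G')\}|}{|\Psi(G,G')|}.
\]
The first move is to re-index this sum by encodings. Lemma~\ref{oldstuff}(i) attaches to each contributing triple $(G,G',\psi)$ the encoding $L = G+G'-Z$, whose defect edges form one of the configurations allowed by Lemma~\ref{oldstuff}(ii), and Lemma~\ref{structure} forces the additional local degree conditions that place $L$ in $\mathcal{L}^{*}(Z)$ rather than merely in $\mathcal{L}(Z)$. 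The map $(G,G',\psi)\mapsto(L,\psi)$ is injective for fixed $(Z,Z')$, so the problem reduces to controlling, for each $L\in\mathcal{L}^{*}(Z)$, the contribution of the triples attached to $L$.

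To bound that contribution I would chase through the circuit-processing rules of~\cite[Section~2.1]{CDG} at the instant the switch $(Z,Z')$ is performed. The crucial observation, which transfers unchanged from the regular case, is that only the restriction of $\psi$ to a bounded set of active vertices --- the start vertex of the current circuit, the endpoints of any shortcut edge or odd chord in play, and the four vertices incident to the current switch --- actually influences whether $e$ is executed; pairings at every other vertex may be permuted freely and cancel against $|\Psi(G,G')|$. A careful enumeration of the cases (as carried out in~\cite{CDG-corrigendum}) shows that at most $14$ active vertices arise, each contributing a factor of at most $d_v\leq d_{\max}$, and combining this with the injection yields the claimed bound $f(e)\leq d_{\max}^{14}\,|\mathcal{L}^{*}(Z)|/|\Omega(\boldsymbol{d})|$.

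The main obstacle is verifying that none of this case analysis secretly exploits regularity. The only assumption in~\cite{CDG} that would be unavailable here is $d\leq n/2$, but that is invoked solely in an earlier reduction by complementation and not in the derivation of the load bound; every degree-dependent factor appearing in the $d_{\max}^{14}$ count is an individual vertex degree from within the current circuit, each bounded trivially by $d_{\max}$, so the regular-case argument transports verbatim. The strengthening from $\mathcal{L}(Z)$ to $\mathcal{L}^{*}(Z)$ is what must be done carefully: the extra structural constraints from Lemma~\ref{structure} are exactly what will make the subsequent encoding count tractable under the hypothesis $d_{\max}\leq\tfrac14\sqrt{M}$.
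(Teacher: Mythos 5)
Your proposal is correct and takes essentially the same route as the paper: the paper's proof likewise cites the regular-case bound from~\cite[Lemma 5]{CDG} and~\cite[Lemma 1]{CDG-corrigendum}, observes that the $d\leq n/2$ assumption is not used there, and notes that the only property of the encoding set needed is that it contains every encoding arising on a canonical path --- which holds for $\mathcal{L}^*(Z)$ by Lemma~\ref{oldstuff}(ii) and Lemma~\ref{structure} --- so the argument transfers after replacing $d$ by $d_{\max}$. Your additional detail about re-indexing by $(L,\psi)$ and the bounded set of ``active'' vertices is a faithful elaboration of the cited proof rather than a different argument.
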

%
\begin{proof}
In~\cite[Lemma 5]{CDG} and~\cite[Lemma 1]{CDG-corrigendum}
it was shown that $f(e)\leq d^{14}\, 
  |\mathcal{L}(Z)|/|\Omega(\boldsymbol{d})|^2$ when 
$\boldsymbol{d} = (d,d,\ldots, d)$ is a regular degree sequence.
(The assumption $d\leq n/2$ is not used in this proof.)  
The proof relied on the fact that $\mathcal{L}(Z)$ contains all encodings
which may arise along a canonical path.  But the same is true for
$\mathcal{L}^\ast(Z)$, by Lemma~\ref{oldstuff}(ii) and Lemma~\ref{structure}, 
so the proof goes through without change in
the irregular setting (after replacing $d$ by $d_{\max}$).  \hfill \qedbox
\end{proof}

The switch operation can be extended to encodings 
in the natural way: each switch reduces two
edge labels by one and increases two edge labels by one, without changing the degrees.
It was shown in~\cite[Lemma 3]{CDG} that from any valid encoding,
one could obtain a graph (with no defect edges)
by applying a sequence of at most three switches.  
In~\cite[Lemma 4]{CDG} we used this fact to bound the ratio
$|\mathcal{L}(Z)|/|\Omega(\boldsymbol{d})|$ for regular degree sequences.  
This provided an upper bound for the flow $f(e)$ through
a transition $e=(Z,Z')$ (as in Lemma~\ref{fbound}, above).

The proof of~\cite[Lemma 3]{CDG} uses regularity to prove 
the existence of certain edges which are
needed in order to find switches to remove the defect edges.
This argument fails for irregular degree sequences.  
Instead, we consider
a slightly more complicated operation than a switch, which we call a 
\emph{3-switch}. (This operation is called a
``circular $C_6$-swap'' in~\cite{EKMS}).  

A 3-switch is described by a 6-tuple $(a_1,b_1,a_2,b_2,a_3,b_3)$ of distinct vertices
such that $a_1b_1$, $a_2b_2$, $a_3b_3$ are all edges and $a_2b_1$, $a_3b_2$, $a_1b_3$ are
all non-edges.   The 3-switch deletes the three edges $a_1b_1$, $a_2b_2$, $a_3b_3$ from the
edge set and replaces them with $a_2b_1$, $a_3b_2$, $a_1b_3$, as shown in Figure~\ref{3-switch}.

\begin{figure}[ht]
\begin{center}
\unitlength=1cm
\begin{picture}(10,3)(0,0)
\put(0.5,0.5){
\includegraphics[scale=0.7]{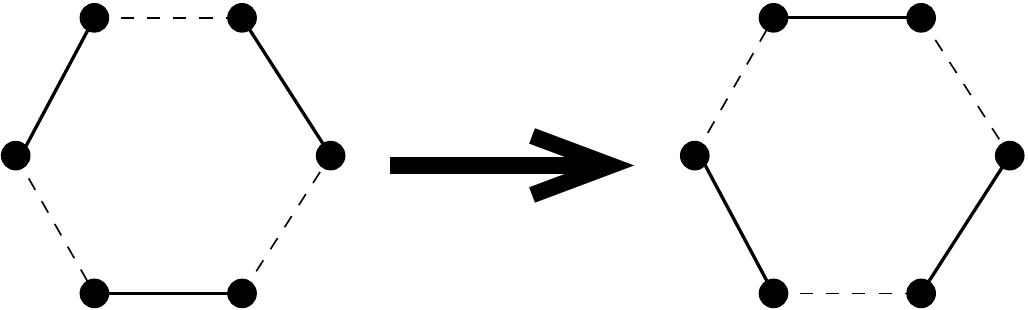}
}
\put(0.2,1.5){$a_1$}
\put(0.9,2.8){$b_1$}
\put(2.5,2.8){$a_2$}
\put(3.0,1.8){$b_2$}
\put(2.3,0.2){$a_3$}
\put(1.1,0.2){$b_3$}
\put(5.1,1.7){$a_1$}
\put(5.7,2.8){$b_1$}
\put(7.3,2.8){$a_2$}
\put(8.1,1.5){$b_2$}
\put(7.2,0.2){$a_3$}
\put(5.9,0.2){$b_3$}
\end{picture}
\caption{A 3-switch}
\label{3-switch}
\end{center}
\end{figure}
Let $\mathcal{C}(p,q)$ be the set of encodings in $\mathcal{L}^\ast(Z)$ with 
precisely $p$ defect edges labelled 2 and precisely $q$ defect edges
labelled $-1$, for $p\in\{0,1,2\}$ and $q\in\{0,1,2,3\}$. 
Then $\Omega(\boldsymbol{d})= \mathcal{C}(0,0)$ and 
\[ \mathcal{L}^\ast(Z) = \cup_{p=0}^2\,\cup_{q=0}^3\, \mathcal{C}(p,q),\]
where this union is disjoint.
(Note that $\mathcal{C}(2,3)=\emptyset$, by Lemma~\ref{oldstuff}(ii).)

For $v\in [n]$, given an encoding $L$,  write $N_L(v)$ to denote
the set of $w\in [n]\setminus \{v\}$ such that $L(v,w)=1$.
This is the set of neighbours of $v$ in $L$, where neighbours along
defect edges are not included. If $L\in \mathcal{C}(p,q)$ then there
are precisely $M/2 - 2p + q$ non-defect edges in $L$.  (To see this,
note that the sum of all entries in the matrix $L$ must equal $M$,
and $L$ has zero diagonal.)

\begin{lemma}
Suppose that $\boldsymbol{d}$ satisfies $d_{\min}\geq 1$ and $3\leq d_{\max}\leq \nfrac{1}{4}\sqrt{M}$.
Let $Z\in\Omega(\boldsymbol{d})$.  Then
\[ |\mathcal{L}^\ast(Z)| \leq  \dfrac{1}{5}\, M^6\,  |\Omega(\boldsymbol{d})|.\]
\label{3switches}
\end{lemma}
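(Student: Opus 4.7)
The plan is to partition $\mathcal{L}^\ast(Z) = \mathcal{C}(0,0) \cup \bigcup_{(p,q)\neq(0,0)} \mathcal{C}(p,q)$ (disjoint, with eleven non-trivial classes since $\mathcal{C}(0,0)=\Omega(\boldsymbol{d})$ and $\mathcal{C}(2,3)=\emptyset$) and, for each nonzero $(p,q)$, to define a map $\Phi_{p,q}:\mathcal{C}(p,q)\to\Omega(\boldsymbol{d})$ obtained from a canonical sequence of at most a few switches and 3-switches, each of which strictly decreases the number of defect edges.  The target inequality then follows from bounding $|\Phi_{p,q}^{-1}(G)|$ for each $G\in\Omega(\boldsymbol{d})$ and summing over the eleven classes.

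The first ingredient is existence of a defect-reducing switch.  For a single 2-defect $\{x,y\}$ (Lemma~\ref{structure}(i) gives $d_x,d_y\geq 2$), I would look for a non-defect edge $\{a,b\}$ of $L$ disjoint from $\{x,y\}$ with $\{a,x\}$ and $\{b,y\}$ non-edges of $L$; the resulting ordinary switch replaces the labels $L(x,y)=2$ and $L(a,b)=1$ by new labels $1$ on $\{a,x\}$ and $\{b,y\}$, removing the defect.  The number of edges failing this test is of order $d_{\max}^2$, far less than the $M/2-O(1)$ available non-defect edges, so the hypothesis $d_{\max}\leq \nfrac{1}{4}\sqrt{M}$ guarantees a valid choice.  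If every ordinary switch is locally blocked, a 3-switch on six vertices is used as a fallback; the same style of counting, now excluding $O(d_{\max}^3)$ forbidden 6-tuples, shows one exists.  The class $\mathcal{C}(0,1)$ is handled symmetrically.  For the multi-defect classes, Lemma~\ref{oldstuff}(ii) restricts the defect pattern to the five configurations in Figure~\ref{f:possible}, and Lemma~\ref{structure}(ii)--(iii) supplies enough degree at vertices where defects meet to keep the availability count positive; in each case I would prescribe an explicit canonical sequence of at most two switches or 3-switches that erases all defects, with all tie-breaking resolved by a fixed ordering on $[n]$.

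The second ingredient is preimage counting.  Given $G\in\Omega(\boldsymbol{d})$, an ordinary switch is determined by $4$ vertices (of which the two defect endpoints are already encoded in the target class, leaving $O(M^2)$ freedom), while a 3-switch is determined by $6$ vertices ($O(M^3)$ freedom after fixing defect labels); so a procedure of length at most two contributes at most $O(M^6)$ preimages per class.  Summing over the eleven classes and tracking the multiplicative constants (using the disjointness of the classes and the fact that defect positions are already partially fixed by $(p,q)$) gives the claimed $\nfrac{1}{5}\, M^6\, |\Omega(\boldsymbol{d})|$.  The main obstacle will be the 3-switch existence argument in the four-defect configurations: there the six vertices must simultaneously avoid several forbidden neighbourhoods at interacting defect vertices, and one must check that the total number of forbidden 6-tuples is still strictly less than the number of admissible ones under $d_{\max}^2\leq M/16$.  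A secondary challenge is bookkeeping the constants tightly enough to achieve $\nfrac{1}{5}$ rather than a cruder absolute constant.
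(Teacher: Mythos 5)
Your high-level skeleton matches the paper's: partition $\mathcal{L}^\ast(Z)$ into the eleven classes $\mathcal{C}(p,q)$ and remove defects one or two at a time using 3-switches (with existence guaranteed by $d_{\max}\leq\nfrac{1}{4}\sqrt{M}$ and Lemmas~\ref{oldstuff}(ii) and~\ref{structure}). But your counting mechanism has a genuine gap. You propose a \emph{canonical} map $\Phi_{p,q}:\mathcal{C}(p,q)\to\Omega(\boldsymbol{d})$ and a preimage bound; the paper instead runs a two-sided double-counting (``switching'') estimate between \emph{adjacent} classes, bounding each ratio $|\mathcal{C}(p,q)|/|\mathcal{C}(p',q')|$ by (max number of reverse configurations at an element of the smaller class)$/$(min number of forward configurations at an element of the larger class). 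The denominator matters: for removing a single $(-1)$-defect the paper shows \emph{every} $L\in\mathcal{C}(0,q)$ admits at least $2M$ forward 3-switches, against at most $M^3$ reverse configurations, giving a ratio of $\nfrac{1}{2}M^2$ per step rather than $M^3$. A canonical map forfeits this factor of $\Theta(M)$ per step: to recover $L$ from $\Phi_{p,q}(L)$ you must still guess the 6-tuple used at each stage (the defect positions of $L$ are \emph{not} ``encoded in the target class''--- the target is a defect-free graph, and $(p,q)$ tells you only how many defects there were, not where), so the natural preimage bound is $M^{3k}$ for a $k$-step procedure.

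This interacts badly with your second error: the procedure does \emph{not} have length at most two. Each 3-switch removes at most two defects (and two only in the special configuration where a 2-defect and a $(-1)$-defect share a vertex), so the classes $\mathcal{C}(0,3)$, $\mathcal{C}(2,2)$, $\mathcal{C}(1,3)$ and $\mathcal{C}(2,1)$ all require three operations. For $\mathcal{C}(0,3)$ your scheme then yields a preimage bound of order $M^9$, whereas the paper's chained ratios give $B_{(-1)}^3=\nfrac{1}{8}M^6$, which is exactly the dominant term in the final sum. So as written your argument proves only $|\mathcal{L}^\ast(Z)|=O(M^9)\,|\Omega(\boldsymbol{d})|$ --- still enough for polynomial mixing, but not the stated inequality, and it would weaken Theorem~\ref{main} by a factor of $M^3$. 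Your primary reliance on ordinary switches is also slightly off target: ordinary switches do suffice to remove a 2-defect under the stated hypotheses, but for a $(-1)$-defect $\{u,v\}$ with $d_u=d_v=1$ the two candidate neighbourhoods $N_L(u),N_L(v)$ each have size $2$ and can be completely blocked, which is precisely why the paper abandons ordinary switches and works exclusively with 3-switches. To repair your proof, replace the canonical map by the two-sided count between adjacent classes (lower-bounding the number of forward 3-switches by $\Omega(M)$ at every element, not merely proving one exists) and chain the resulting ratios down to $\mathcal{C}(0,0)$.
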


\begin{proof}
We prove that any $L\in\mathcal{L}^\ast(Z)$ can be transformed into
an element of $\Omega(\boldsymbol{d})$ (with no defect edges) using a sequence
of at most three 3-switches.  The strategy is as follows:  in Phase 1 we
aim to remove two defects per 3-switch (one 2-defect and one $(-1)$-defect),
then in Phase 2 we remove one 2-defect per 3-switch, and finally in Phase 3
we remove one $(-1)$-defect per 3-switch.
There is at most one step in Phase 1, though the other phases may have
more than one step: any phase may be empty.
Each 3-switch we perform gives rise to an upper bound on certain ratios
of the sizes of the sets $\mathcal{C}(p,q)$, by double counting.  
The proof is completed by combining these bounds.
(Such an argument is often called a ``switching argument'' in the
asymptotic enumeration literature: see~\cite{McKW91} for example.)

\medskip

\noindent {\bf Phase 1.}\ 
If $p+q\leq 3$ then Phase 1 is empty: proceed to Phase 2.
Otherwise, suppose that $L\in\mathcal{C}(p,q)$ where
$p+q =4$.  
Then $(p,q)\in \{ (2,2), \, (1,3)\}$, 
and it follows from Figure~\ref{f:possible} that there must be 
a vertex $b_1$ which is incident with a 2-defect $L(a_1,b_1) = 2$
and a $(-1)$-defect $L(a_2,b_1) = -1$.  
We count the number of 3-switches $(a_1,b_1,a_2,b_2,a_3,b_3)$ which
may be applied to $L$ to produce an encoding $L'\in\mathcal{C}_{p-1,q-1}$.
This operation is shown in Figure~\ref{double-switch}, where defect
edges are shown using thicker lines: a thick solid line is a 
2-defect edge while a thick dashed line is a $(-1)$-defect edge.

\begin{figure}[ht]
\begin{center}
\unitlength=10mm
\begin{picture}(10,3)(0,0)
\put(0.5,0.5){
\includegraphics[scale=0.7]{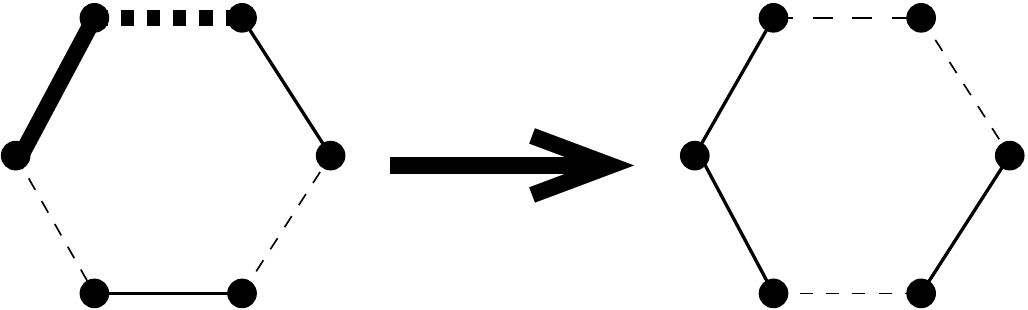}
}
\put(0.2,1.5){$a_1$}
\put(0.9,2.8){$b_1$}
\put(2.5,2.8){$a_2$}
\put(3.0,1.8){$b_2$}
\put(2.3,0.2){$a_3$}
\put(1.1,0.2){$b_3$}
\put(5.1,1.7){$a_1$}
\put(5.7,2.8){$b_1$}
\put(7.3,2.8){$a_2$}
\put(8.1,1.5){$b_2$}
\put(7.2,0.2){$a_3$}
\put(5.9,0.2){$b_3$}
\end{picture}
\caption{A 3-switch with $L(a_1,b_1)=2$,\,  $L(a_2,b_1)=-1$.}
\label{double-switch}
\end{center}
\end{figure}
Given $(a_1,b_1,a_2)$, there
is at least one vertex $b_2\in N_L(a_2)\setminus \{ a_1\}$.
To see this, first suppose that $a_2$ is not incident with a 2-defect.
Then $N_L(a_2)$ has at least $d_{a_2}+1\geq 2$ elements, leaving at least
one which is distinct from $a_1$.  Otherwise, if $a_2$ is incident
with a 2-defect then it can be incident with at most one 2-defect,
since $p\leq 2$.  Then there are at least $d_{a_2}-2$ choices for $b_2$
in $N_L(a_2)\setminus \{ a_1\}$,
and this number is positive by Lemma~\ref{structure}(iii).

Next, we choose $(a_3,b_3)$ such that all six vertices are distinct,
$L(a_3,b_3)=1$ and  $L(a_3,b_2)=L(a_1,b_3)=0$.
There are $M - 4p + 2q$ possibilities for $(a_3,b_3)$ with $L(a_3,b_3)=1$,
but we must reject those which are 
incident with the four vertices already chosen, or which are
incident to a neighbour of $a_1$ or $b_2$.  We need to be careful
with $(-1)$-defect edges.  Hence, for all $x\in [n]$, let $\eta_x$
be the number of $(-1)$-defect edges \emph{other than} $\{ a_2,b_1\}$
which are incident with $x$ in $L$.  Then $\sum_{x\in [n]} \eta_x \leq 4$
since there are at most two more $(-1)$-defect edges in $L$.
Furthermore, $\eta_{a_1} + \eta_{b_2}\leq 3$.
The number of bad choices for $(a_3,b_3)$ is at most
\[ 2\left( |N_L(b_1)| + \sum_{x\in N_L(a_1)} |N_L(x)| + \sum_{y\in N_L(b_2)}
       | N_L(y)|\right).
\]
To see this, note that $a_2\in N_L(b_2)$ so all edges incident with $a_2$
are counted in the final sum (with $y=a_2$).  Furthermore, for each $x\in N_L(a_1)$,
the edge from $a_1$ to $x$ is among those counted by $|N_L(x)|$, so the first
sum covers all edges incident with $a_1$ or a neighbour of $a_1$
(and similarly for the second sum).  Hence the number of bad choices
for $(a_3,b_3)$ is at most
\begin{align*}
 & 2\biggl( d_{b_1} - 1 + \eta_{b_1} 
    + \sum_{x\in N_L(a_1)} ( d_x  + \eta_x) \\
 & \hspace*{45mm} {} + 
       \sum_{y\in N_L(b_2)} (d_y  + \eta_y)\biggr)\\
  &\leq 2\biggl( d_{\max} -1 + \eta_{a_2} + \eta_{b_1} \\
  & \hspace*{15mm} {}
        + d_{\max} \left(d_{a_1} + d_{b_2} - 2 + \eta_{a_1} + \eta_{b_2}\right)\\
  & \hspace*{45mm} {}
    + \sum_{x\not\in \{ a_1,b_1,a_2,b_2\}} 2\eta_x \biggr)\\
 &\leq 2\left(2 d_{\max}^2 + 2d_{\max} + 1\right).
\end{align*}
The final inequality follows from setting $\eta_{a_1} + \eta_{b_2} = 3$,
the maximum possible, and letting $\eta_x=1$ for some $x\not\in \{ a_1,b_1,a_2,b_2\}$
(as well as  bounding $d_{a_1}$ and $d_{b_2}$ by $d_{\max}$).

Hence, the number of possible 3-switches $(a_1,b_1,a_2,b_2,a_3,b_3)$
such that $L(a_1,b_1)=2$ and
$L(a_1,b_3)=-1$ is at least 
\begin{align}
\label{21forward}
  M -4p & +2q - 2\left(2 d_{\max}^2 + 2 d_{\max} + 1\right) \\
 &\geq  M - 2\left(2 d_{\max}^2 + 2 d_{\max} + 3\right) \notag\\
 &\geq M-6 d_{\max}^2 \notag\\ 
  &\geq M/2 \notag
\end{align}
since $3\leq d_{\max}\leq \nfrac{1}{4}\, \sqrt{M}$.
Each such 3-switch produces an encoding $L'\in\mathcal{C}(p-1,q-1)$.

Now we consider the reverse of this operation, which is
given by reversing the arrow in Figure~\ref{double-switch}.  Given 
$L'\in\mathcal{C}(p-1,q-1)$, we need an upper bound on
the number of 6-tuples
$(a_1,b_1,a_2,b_2,a_3,b_3)$ such that $L'(a_1,b_1)=L'(a_1,b_3)=L'(a_3,b_2)=1$
and $L'(a_2,b_1)=L'(a_2,b_2)=L'(a_3,b_3)=0$.   Since the encoding
$L\in\mathcal{C}(p,q)$ produced by this reverse operation must be
consistent with $Z$, it follows that $\{ a_2,b_1\}$ must be an
edge of $Z$.  Hence there are precisely $M$ choices for $(a_2,b_1)$.
There are at most $d_{b_1} + \eta_{b_1}$ ways to choose $a_1\in N_L(b_1)$
and at most $d_{a_1}-1+\eta_{a_1}$ ways to choose $b_3\in N_L(a_1)\setminus
\{ b_1\}$.  
From Figure~\ref{f:possible}, if $\eta_{a_1}=2$ then $\eta_{b_1}=0$,
and if $\eta_{b_1}=1$ then $\eta_{a_1}\leq 1$.  
Furthermore, $\eta_{b_1}\leq 1$.  (Otherwise, the reverse switching
would produce an encoding which is not valid.)
Therefore,
\begin{align*}
   (d_{b_1} + \eta_{b_1})(d_{a_1} - 1 + \eta_{a_1})
    &\leq d_{\max}\, (d_{\max} + 1)\\
  & \leq \nfrac{4}{3}\, d_{\max}^2.
\end{align*}

Finally we
must choose $(a_3,b_2)$ such that $L(a_3,b_2) = 1$, the vertices $a_3,b_2$
are distinct from the four vertices chosen so far and $L'(a_2,b_2) = L'(a_3,b_3)=0$.  
 When $(p,q)=(2,2)$ we ignore all conditions
except $L(a_3,b_2)=1$, and take
\[ M - 4(p-1) + 2(q-1) = M - 2 \leq M\]
as an upper bound for the number of good choices of $(a_3,b_2)$.
When $(p,q)=(1,3)$ there are no 2-defects in $L'$, as
$L'\in\mathcal{C}(0,2)$, so there are at most
\begin{align*}  M - 4(p-1) & + 2(q-1) - (d_{a_1} + d_{b_1} + d_{a_2} + d_{b_3})\\
 & \leq M -4p + 2q - 2 \\
  & = M
\end{align*}
good choices for $(a_3,b_2)$.  (The existence of any additional $(-1)$-defect
edges incident with $a_1$, $b_1$, $a_2$ or $b_3$ can only help here.)
Hence the number of ways to apply the reverse operation to $L'\in\mathcal{C}(p-1,q-1)$ to produce a consistent encoding $L\in\mathcal{C}(p,q)$ is at
most $\nfrac{4}{3}\, d_{\max}^2 M^2$.

Combining this with (\ref{21forward}) shows that whenever $p+q=4$,
by double counting,
\begin{equation}
\label{B21}
 \frac{|\mathcal{C}(p,q)|}{|\mathcal{C}(p-1,q-1)|} \leq \dfrac{8}{3}\, d_{\max}^2 M.
\end{equation}

\medskip

\noindent {\bf Phase 2.}\  Once Phase 1 is complete, we have reached an
encoding $L\in\mathcal{C}(p,q)$ with $p+q\leq 3$.
If $p=0$ then Phase 2 is empty: proceed to Phase 3.
Otherwise, we have
$(p,q) \in \{  (2,1),\, (2,0),\, (1,2),\, (1,1),\, (1,0)\}$.
We count the number of ways to perform a 3-switch to reduce the
number of 2-defect edges by one, as shown in Figure~\ref{2-3-switch}.

\begin{figure}[ht]
\begin{center}
\unitlength=1cm
\begin{picture}(10,3)(0,0)
\put(0.5,0.5){
\includegraphics[scale=0.7]{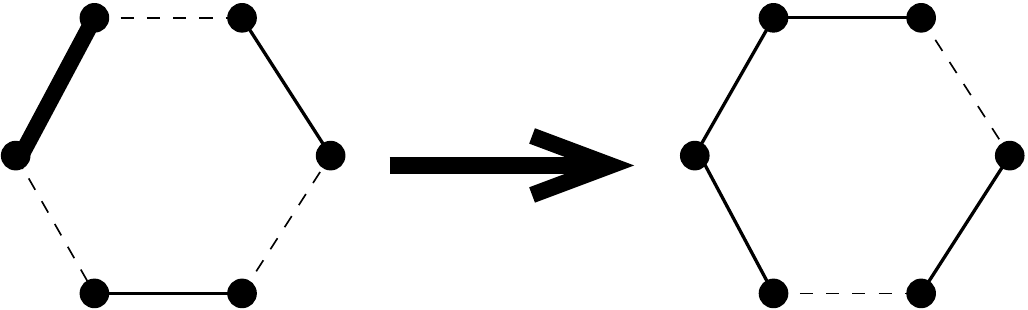}
}
\put(0.2,1.5){$a_1$}
\put(0.9,2.8){$b_1$}
\put(2.5,2.8){$a_2$}
\put(3.0,1.8){$b_2$}
\put(2.3,0.2){$a_3$}
\put(1.1,0.2){$b_3$}
\put(5.1,1.7){$a_1$}
\put(5.7,2.8){$b_1$}
\put(7.3,2.8){$a_2$}
\put(8.1,1.5){$b_2$}
\put(7.2,0.2){$a_3$}
\put(5.9,0.2){$b_3$}
\end{picture}
\caption{A 3-switch with $L(a_1,b_1)=2$. }
\label{2-3-switch}
\end{center}
\end{figure}

Choose an ordered pair $(a_1,b_1)$ such that $L(a_1,b_1)=2$, in 
$2p$ ways. The number of ways to choose the ordered pair $(a_2,b_2)$ such that 
$a_1,b_1,a_2,b_2$ are all distinct, $L(a_2,b_2)=1$ and $L(a_2,b_1)=0$,
is  at least
\begin{align*}
&M - 4p + 2q - 2\biggl( |N_L(a_1)| + \sum_{x\in N_L(b_1)} |N_L(x)|\biggr)\\
  &\geq M - 4p + 2q \\
   & \hspace*{1cm} {} - 2\biggl( d_{a_1} - 2 + \eta_{a_2}
   + \sum_{x\in N_L(b_1)} (d_{x} + \eta_x)\biggr)\\
 &\geq M - 2\biggl(d_{\max} + 2 
     + d_{\max} (d_{b_1} - 2 + \eta_{b_1}) 
    + \sum_{x\neq b_1} \eta_x \biggr)\\
 &\geq M - 2\left( d_{\max}^2 + d_{\max} + 4\right)\\
 & \geq M - 4 d_{\max}^2.
\end{align*}
This uses the fact that $L$ may contain up to two $(-1)$-defect
edges, so the worst case is when $\eta_{b_1} = 2$ and 
$\sum_{x\neq b_1} \eta_x = 2$.

Next, choose an ordered pair $(a_3,b_3)$ such that all six vertices
are distinct, $L(a_3,b_3)=1$ and $L(a_1,b_3)=L(a_3,b_2)=0$.  
This can be done in at least
\begin{align*}
 &M - 4p + 2q \\
  & \hspace*{3mm} {} - 2\biggl( |N_L(b_1)| + \sum_{x\in N_L(a_1)} |N_L(x)|
         + \sum_{y\in N_L(b_2)} | N_L(y)|\biggr)\\
  &\geq M - 4p + 2q 
   - 2\biggl( d_{b_1} -2 + \eta_{b_1} + 
   \sum_{x\in N_L(a_1)} (d_x  + \eta_x) 
  \\ & \hspace*{4cm} {} 
  + \sum_{y\in N_L(b_2)} (d_y + \eta_y)\bigg)\\
 &\geq M - 2\biggl(d_{\max} + 2 
    + \eta_{a_2} + \eta_{b_1} 
   \\ & \hspace*{25mm} {} 
  + d_{\max}\left( d_{a_1} + d_{b_2} - 2 + \eta_{a_1} + \eta_{b_2}\right)  \\
  & \hspace*{40mm} {} 
  +  \sum_{x\not\in \{ a_1,b_1,a_2,b_2\}} 2\eta_x \biggr)\\
 &\geq M - 2\left(2 d_{\max}^2 + 2d_{\max} + 4\right)\\
  & \geq M - 8 d_{\max}^2
\end{align*}
ways, arguing as above.  
(Again, the worst case is when $\eta_{a_1} + \eta_{b_2} = 3$ and $\eta_x=1$
for some $x\not\in \{ a_1,b_1,a_2,b_2\}$.)
Hence there are at least
\begin{equation}
\label{forward2}
  2\left(M - 4 d_{\max}^2\right)\left(M - 8d_{\max}^2\right)
  \geq \nfrac{1}{2} M^2 
\end{equation}
such choices for $(a_1,b_1,a_2,b_2,a_3,b_3)$, using the stated
upper bound on $d_{\max}$.

For the reverse operation, let $L'\in\mathcal{C}(p-1,q)$ where
$(p,q)\in \{ (2,1),\, (2,0),\, (1,2),\, (1,1),\, (1,0)\}$. 
 We need an upper bound
on the number of 6-tuples $(a_1,b_1,a_2,b_2,a_3,b_3)$ with
$L(a_1,b_1)=L(a_1,b_3) = L(a_2,b_1) = L(a_3,b_2)=1$ and
$L(a_2,b_2)=L(a_3,b_3)=0$.  There are at most $M-4p+2q\leq M$
choices for $(a_1,b_1)$ with $L(a_1,b_1)=1$, and then there are
at most 
\[ (d_{a_1} - 1 + \eta_{a_1})(d_{b_1} - 1 + \eta_{b_1}) \leq d_{\max}^2\]
choices for $(a_2,b_3)$.     This uses the fact that there
are at most two defect edges in $L'$, and hence 
$\eta_{a_1} + \eta_{b_1} \leq 2$, by choice of $(a_1,b_1)$.
Finally there are
at most $M - 4p + 2q\leq M$ choices for $(a_3,b_2)$,  so the number of
6-tuples where the reverse operation can be performed is at most
$d_{\max}^2 M^2$.

Combining this with (\ref{forward2}), it follows that 
\begin{equation} 
\frac{|\mathcal{C}(p,q)|}{|\mathcal{C}(p-1,q)|} \leq 2 d_{\max}^2
\label{B2}
\end{equation}
holds for $(p,q)\in \{ (2,1),\, (2,0),\, (1,2),\, (1,1),\, (1,0)\}$.

\medskip

\noindent {\bf Phase 3.}\ After Phase 2, we may
suppose that $p=0$.  Let $L\in\mathcal{C}(0,q)$ where $q\in \{ 1,2,3\}$.  
We count the number of 6-tuples $(a_1,b_1,a_2,b_2,a_3,b_3)$
where a 3-switch can be performed with $L(a_2,b_1)=-1$.
Performing this 3-switch
will produce $L'\in\mathcal{C}(0,q-1)$, as illustrated in Figure~\ref{1switch}.

\begin{figure}[ht]
\begin{center}
\unitlength=1cm
\begin{picture}(10,3)(0,0)
\put(0.5,0.5){
\includegraphics[scale=0.7]{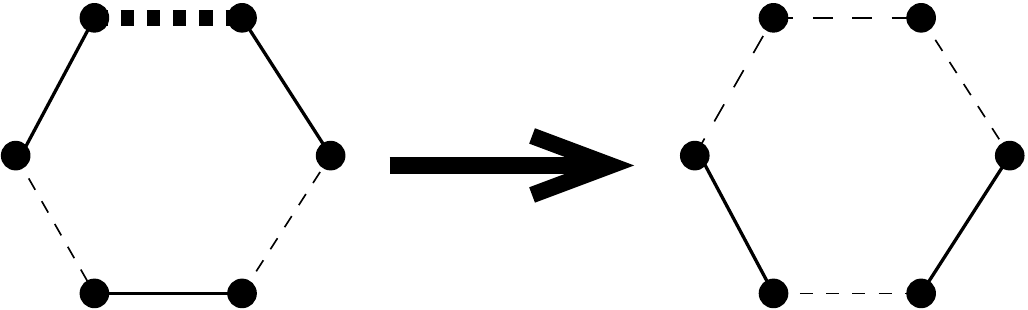}
}
\put(0.2,1.5){$a_1$}
\put(0.9,2.8){$b_1$}
\put(2.5,2.8){$a_2$}
\put(3.0,1.8){$b_2$}
\put(2.3,0.2){$a_3$}
\put(1.1,0.2){$b_3$}
\put(5.1,1.7){$a_1$}
\put(5.7,2.8){$b_1$}
\put(7.3,2.8){$a_2$}
\put(8.1,1.5){$b_2$}
\put(7.2,0.2){$a_3$}
\put(5.9,0.2){$b_3$}
\end{picture}
\caption{A 3-switch with $L(a_2,b_1)=-1$. }
\label{1switch}
\end{center}
\end{figure}
There are $2q$ ways to choose $(b_1,a_2)$, and
at least $d_{b_1}+1$ ways to choose $a_1\in N_L(b_1)$.
Then there are 
at least $d_{a_2}$ ways to choose $b_2\in N_L(a_2)\setminus \{ a_1\}$.
(Note that the presence of other $(-1)$-defect edges incident with
$b_1$ or $a_2$ only helps here.)  Finally, we must choose $(a_3,b_3)$
with $L(a_3,b_3)=1$
such that all vertices are distinct, $L(a_3,b_2)=0$ and $L(a_1,b_3)=0$.
The number of choices for $(a_3,b_3)$ is at least
\begin{align*}
 & M + 2q - 2\biggl( \sum_{x\in N_L(a_1)} |N_L(x)| + \sum_{y\in N_L(b_2)} |N_L(y)|
            \biggr)\\
  &\geq  M  + 2q - 2\biggl( \sum_{x\in N_L(a_1)} (d_x + \eta_x) 
         + \sum_{y\in N_L(b_2)} (d_y + \eta_y)\biggr)\\
 &\geq M  - 2\biggl( d_{\max} \left( 2d_{\max} + \eta_{a_1} + \eta_{b_2}\right)
       - 1 + \sum_{x\not\in \{ a_1,b_2\}} 2\eta_x\biggr)\\
 &\geq M - 2\left(2d_{\max}^2 + 3 d_{\max} + 1\right)\\
 & \geq M - 8 d_{\max}^2.
\end{align*}
The penultimate line follows by substituting $\eta_{a_1} + \eta_{b_2} =3$
and letting $\eta_x = 1$ for some $x\not\in \{ a_1,b_1,a_2,b_2\}$. 
Hence the number of 3-switches which can be performed in $L$
to reduce the number of 2-defects by exactly one
is at least
\begin{align}
\label{1forward} 
2q(d_{b_1}+1)\, d_{a_2}\, (M - 8 d_{\max}^2) &\geq 4q(M - 8 d_{\max}^2) 
\notag \\ & \geq 2M,
\end{align}
using the given bounds on $d_{\max}$.

For the reverse operation, let $L'\in \mathcal{C}(0,q-1)$, where
$q\in \{ 1,2,3\}$.  We need an upper bound on the number of
6-tuples such that $L(a_1,b_3)=L(a_3,b_2)=1$, $L(a_1,b_1)=L(b_1,a_2)=
L(a_2,b_2)=L(a_3,b_3) = 0$ and $\{ a_2,b_1\}$ is an edge of $Z$.
There are at most $M$ choices for $(a_2,b_1)$ satisfying
the latter condition, then at most
$M + 2(q-1) - 2(d_{a_2} + d_{b_1})\leq M$ ways to choose 
$(a_3,b_2)$ with $L(a_3,b_2)=1$ 
and $a_1,a_3,b_2,b_3$ all distinct. Similarly, there at most
$M$ ways to choose $(a_1,b_3)$.
Hence the number of reverse operations is at most $M^3$. 

Combining this with (\ref{1forward}) shows that
\begin{equation}
 \frac{|\mathcal{C}(0,q)|}{|\mathcal{C}(0,q-1)|} \leq \nfrac{1}{2} M^2
\label{B1}
\end{equation}
holds for $q\in \{1,2,3\}$, by double counting.

\medskip

\noindent {\bf Consolidation.}\ 
Define
\[ B_{(2,-1)} = \dfrac{8}{3} d_{\max}^2 M,\quad B_{(2)} = 2 d_{\max}^2, \quad
      B_{(-1)}= \nfrac{1}{2} M^2. \]
It follows from (\ref{B21})--(\ref{B1}) that
\begin{align*}
 & \frac{|\mathcal{L}^\ast(Z)|}{|\Omega(\boldsymbol{d})|} \\
  &=
  \sum_{p=0}^2\,\sum_{q=0}^3\, \frac{|\mathcal{C}(p,q)|}{|\mathcal{C}(0,0)|} \\
   &\leq
  1 + B_{(2)} + B_{(2)}^2 + B_{(-1)} + B_{(-1)}B_{(2)} + B_{(-1)} B_{(2)}^2\\
  & \hspace*{8mm} {} 
    B_{(-1)} B_{(2)} B_{(2,-1)}
    + B_{(-1)}^2  + B_{(-1)}^2  B_{(2)} 
   \\ & \hspace*{8mm} {}  
   + B_{(-1)}^2\, B_{(2,-1)} + B_{(-1)}^3 \\
 &\leq \nfrac{1}{5} M^6,
\end{align*}
using the upper bound on $d_{\max}$ and the fact that $M\geq 144$.
This completes the proof of Lemma~\ref{3switches}. \hfill \qedbox
\end{proof}

Since $M\leq d_{\max} n$, the bound $\nfrac{1}{5} M^6$ is at most
a factor $n/10$ bigger than the analogous bound $2d^6 n^5$ given
in~\cite[Lemma 4]{CDG} in the regular case.

Finally we can prove Theorem~\ref{main}.

\begin{proof} (\emph{Proof of Theorem~\ref{main}})\
We wish to apply (\ref{flowbound}).
It follows from the configuration model (see~\cite[Equation (1)]{McKW91}) that
the set $\Omega(\boldsymbol{d})$ has size
\begin{equation} |\Omega(\boldsymbol{d})| 
    \leq \frac{M!}{2^{M/2}\, (M/2)!\, \prod_{j=1}^n d_j!} \leq
          \exp\left( \nfrac{1}{2} \, M\log(M) \right).
\label{size}
\end{equation}
Hence the smallest stationary probability $\pi^\ast$ satisfies
$\log(1/\pi^\ast) = \log(|\Omega(\boldsymbol{d})|) \leq M\log(M)$. 
Next, $\ell(f)\leq M/2$ since each transition along a canonical path
replaces an edge of $G$ by an edge of $G'$.

Finally, if $e=(Z,Z')$ is a transition of the switch chain then
$1/Q(e) = 6\, a(\boldsymbol{d}) \leq M^2$, using (\ref{ad}).
Combining this with Lemmas~\ref{fbound} and~\ref{3switches} gives
$\rho(f) \leq \nfrac{1}{5}  d_{\max}^{14}\, M^8$.
Substituting these expressions into (\ref{flowbound}) gives
the claimed bound on the mixing time.  \hfill \qedbox
\end{proof}

\end{document}